\def\id{{\rm 1\kern-.22em l}}
\newcommand{\C}{\mathbb{C}}
\newcommand{\proj}[1]{\ketbra{#1}{#1}}
\newcommand{\MM}{\mathcal{M}}
\newcommand{\JJ}{\mathcal{J}}
\newcommand{\PP}{\mathcal{P}}
\newtheorem{propo}{Proposition}
\begin{document}

\title{Single-shot comparison of random quantum channels and measurements}
\author{Marcin Markiewicz}
\affiliation{Institute of Theoretical and Applied Informatics, Polish Academy of Sciences, ul. Ba{\l}tycka 5, 44-100 Gliwice, Poland}
\affiliation{International Centre for Theory of Quantum Technologies, University of Gdansk, 80-309 Gda{\'n}sk, Poland}
\orcid{0000-0002-8983-9077}
\author{{\L}ukasz Pawela}
\affiliation{Institute of Theoretical and Applied Informatics, Polish Academy of Sciences, ul. Ba{\l}tycka 5, 44-100 Gliwice, Poland}
\orcid{0000-0002-0476-7132}
\author{Zbigniew Pucha{\l}a}
\affiliation{Institute of Theoretical and Applied Informatics, Polish Academy of Sciences, ul. Ba{\l}tycka 5, 44-100 Gliwice, Poland}
\orcid{0000-0002-4739-0400}

\begin{abstract}
In this work we provide an efficiency analysis 
of the problem of comparison of two randomly chosen
quantum operations in the single-shot regime. 
We provide tight bounds for the success probability of 
such a protocol for arbitrary quantum channels and 
generalized measurements.
\end{abstract}

\maketitle

\section{Introduction and motivation}

Comparison of unknown quantum operations is a task in which one is given an
ability to apply two given operations with specified input and output dimensions
on arbitrary quantum state, and decide on whether they are identical or distinct
based on final measurement outcomes. 
Our task -- deciding equality of two black‑box channels from a specified ensemble
in a single use per box -- differs from binary discrimination where the two
hypotheses are fully specified \cite{Chefles00,Barnett09}.
Several classes of channels have been
already analyzed in this context (unitary channels
\cite{Hillery10,Soeda21,Hashimoto22}, mixed unitary channels \cite{Sacchi05},
von Neumann measurements \cite{Puchala18,
Puchala2021multipleshot,krawiec2024discrimination}, however there are no known
results regarding comparison of \textit{arbitrary} channels and measurements
in the most general scenario for equality testing of channels drawn at random. Partial results on this issue are
known only for the scenario of unambiguous comparison of general
measurements \cite{Ziman09, Sedlak14}, in which one assumes that the protocol in
each run gives either perfectly correct answer or gives no answer at all. On the
other hand in the scenario discussed in this work one assumes that the algorithm
gives definite answer in  each run, however the answer can be erroneous in some
percentage of runs.
 
Although any completely positive trace preserving (CPTP) channel can be seen
from the \textit{dilation} perspective as a unitary channel on an extended
Hilbert space, the specific optimal performance of  equality testing task
significantly depends on the dimension of the extension and on the way of
sampling the unknown channel, therefore it cannot be directly reduced to the
unitary case itself.

In this work we provide analysis of optimal performance of comparison of
quantum channels and generalized quantum measurements drawn from Haar-Stinespring
ensembles in the single-shot scenario. While channel hypothesis testing studies (adaptive vs non-adaptive)
strategies and asymptotic exponents \cite{Harrow10, Hayashi09, Pirandola19, Zhuang20}, 
we consider the single-use average-case performance of equality testing.

The work is divided into two parts.
Firstly we analyze the performance of comparison of two random channels as a
function of dimensions, which characterize it. In the second part we analyze
comparison of two random POVM-type measurements, based on results on
comparison of arbitrary channels. This can be done since a random POVM
measurement, treated as a channel with classical output, can be represented as a
composition of a random channel with the dephasing channel. Our results provide
the most general solution to the equality testing problem of two quantum
operations in the single-shot regime.

Our main result states that a single, parameter-free strategy is optimal for
all input and output dimensions. The optimal strategy can be summarized as
follows: prepare the antisymmetric two-copy input on the channel ports and
measure with the symmetric projector. No extra ancilla beyond the second port is
needed.

\section{Comparison of unknown channels}\label{sec:randchan}
\subsection{Symmetric comparison}

Let us consider a scenario in which we are given two black boxes, each
containing an unknown quantum channel, with specified input and output
dimensions as respectively $d_i$ and $d_o$. Our goal is to determine whether the
boxes contain the same operation or different ones, and provide a bound on the
probability of correct guessing in this scenario. For a sketch see
Fig.~\ref{fig:discrim}.

\begin{figure}[!h]
\centering\includegraphics[width=\columnwidth]{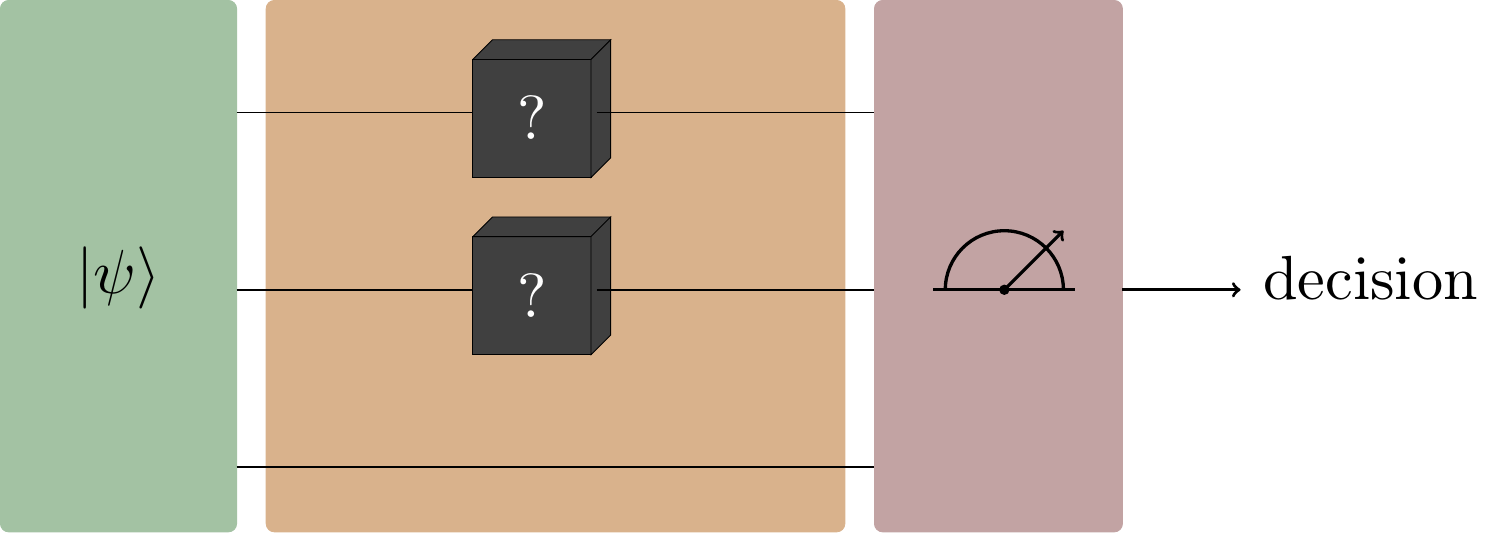} \caption{A
schematic representation of the single-shot symmetric comparison scheme.
Note that the scheme allows for usage of an additional quantum register
untouched by both channels.}\label{fig:discrim}
\end{figure}
A bound on the probability of correct distinction between two channels $\Phi$
and $\Psi$ can be bounded by the diamond norm:
\begin{equation}
    \label{pH}
p^s_H(\Phi, \Psi) \leq \frac{1}{2} + \frac{1}{4} \left\|\Phi - \Psi \right\|_\diamond,
\end{equation}
which can be seen as an old Holevo-Helstrom result \cite{Holevo73, Helstrom69}
restated using Kitaev's diamond norm \cite{Kitaev98, Watrous18}.
Formally, our one-use scheme is a process POVM (tester) \cite{Chiribella08,Gutoski07},
 complementing the network formalism \cite{chiribella2009theoretical}.

Let us denote the set of
$d\times d$ complex matrices by $\mathcal M_d(\mathbb C)$. For a
Hermiticity-preserving $\Xi: \MM_{d_i}(\C) \to \MM_{d_o}(\C)$ with a Choi
representation $J(\Xi) \in \MM_{d_o}(\C) \otimes \MM_{d_i}(\C)$:
$$J(\Xi)=\sum_{i,j=1}^{d_i} \Xi(\ket{i}\!\bra{j})\otimes \ket{i}\!\bra{j}$$
 we can
provide an upper bound on the diamond norm~\cite{nechita2018almost}:
\begin{equation}
    \label{eq:bound-diamond}
\left\| \Xi \right\|_\diamond \leq \left\| \Tr_{d_o} \left| J(\Xi) \right|\right\|.
\end{equation}

Since our task is to find performance of comparison between \textit{unknown}
channels, we have to define a way of sampling such channels. To start, let us
define an arbitrary channel $\Phi_U^{(s)}: \MM_{d_i}(\C) \to \MM_{d_o}(\C)$ via
the Stinespring dilation with extension denoted as $s$:
\begin{equation}
    \label{measurePhi}
    \Phi_U^{(s)}(\rho) = \Tr_{s} U\rho U^\dagger = \Tr_{s} \Phi_U(\rho),
\end{equation}
where the input state $\rho \in \MM_{d_i}(\C)$, $U$ is an isometry operation
such that:
\begin{equation}
    \label{eq:iso}
U: \C^{d_i} \to \C^{d_o} \otimes \C^{s},
\end{equation}
and $\Phi_U(\rho)=U\rho U^{\dagger}$ denotes the corresponding isometry channel.
Hence, a random quantum channel $\Phi_U^{(s)}$ defined in this manner can be
seen as a composition of a random isometry with a partial trace:
\begin{equation}
\Phi_U^{(s)}=\Tr_s\circ \,\Phi_U.
    \label{eq:defPhiU}
\end{equation}
Throughout we fix the input and output dimensions ($d_i$,$d_o$) and an
environment size $s$.

Now we can define the measure $\mu_{\Phi_U^{(s)}}$ on the set of quantum
channels to be derived from to the Haar measure $\mu^s_{d_id_o}$ on the set of
random isometries \eqref{eq:iso}, which can be obtained as a measure on random
truncated unitaries $\textrm{U}(sd_o)$ \cite{Zyczkowski_2000, heinosaari2020random}. Measure
$\mu^s_{d_id_o}$ is a function of three dimensions: $d_i$, $d_o$ and $s$ which
define the isometry \eqref{eq:iso}. For the sake of clarity of further
derivations we will denote the volume differential $\dd\mu^s_{d_id_o}$
corresponding to the measure $\mu^s_{d_id_o}$ as simply $\dd U$.

When considering a quantum channel $\Phi$ via Stinespring representation one
needs to assume  $s d_o \geq d_i$ for the channel  to be trace preserving. At
the same time with the same assumption but different interpretation of $s$ one
can define two another measures on the set of quantum channels
\cite{Kukulski21}, via random Choi matrices and random Krauss operators. As
shown in \cite{Kukulski21} all these three ways of sampling random quantum
channels are equivalent, in a sense that they define the same measure on the set
of channels. For convenience in this work we solely utilize the Stinespring
version.

In the symmetric comparison scenario we are testing equality between
application of random channels $\Phi_U^{(s)}\!\!\otimes\!\Phi_U^{(s)}$
(identical ones) and $\Phi_U^{(s)}\!\!\otimes\!\Phi_V^{(s)}$ (distinct ones),
drawn according to the measure $\mu^s_{d_id_o}$. Such averaged channels are
represented by the integrals $\int\Phi_U^{(s)}\!\!\otimes\!\Phi_U^{(s)}\dd U$
and $\int \Phi_U^{(s)}\!\!\otimes\!\Phi_V^{(s)}\dd U\dd V$ respectively. Hence,
in order to find the Holevo-Helstrom bound \eqref{pH} for the probability
$p^s_H\left(\int\Phi_U^{(s)}\!\!\otimes\!\Phi_U^{(s)}\dd U, \int
\Phi_U^{(s)}\!\!\otimes\!\Phi_V^{(s)}\dd U\dd V\right)$, we need to find the
value of the following expression:
\begin{equation}
\left\| \int\Phi_U^{(s)}\!\!\otimes\!\Phi_U^{(s)}\dd U - \int \Phi_U^{(s)}\!\!\otimes\!\Phi_V^{(s)}\dd U\dd V \right\|_\diamond.
\label{eq:diamPhi}
\end{equation}
In order to provide the diamond norm bounds we need the Choi representation of
both the channels in \eqref{eq:diamPhi}. For now, we will focus on the first
channel. As the only random part of the channel $\Phi_U^{(s)}$
\eqref{eq:defPhiU} is the isometry $\Phi_U$,
from~\cite{krawiec2024discrimination,puchala2017symbolic} we have:
\begin{widetext}
\begin{equation}
\begin{split}
\JJ\left( \int \Phi_U \otimes \Phi_U \dd U \right) &= \frac{1}{(d_o s)^2 - 1}
\left(\id_{(d_o s)^2} \otimes \id_{d_i^2} + S_{(d_o s),(d_o s)} \otimes S_{d_i, d_i}\right) +\\
&-\frac{1}{(d_o s)\left((d_o s)^2 - 1\right)}
\left(\id_{(d_o s)^2} \otimes S_{d_i, d_i} + S_{(d_o s),(d_o s)} \otimes \id_{d_i^2}\right),
\end{split}
\end{equation}
\end{widetext}
where $S_{a,b}$ denotes a swap operation between a $b$-dimensional system and an
$a$-dimensional system. Composing this Choi matrix with the Choi representation
of the partial trace via the notion of link
product~\cite{chiribella2009theoretical} we obtain a simple relation: 
\begin{equation}
\JJ\left(\Tr_s\circ \,\Phi_U\right)=\Tr_s \JJ(\Phi_U), 
\label{PTvsJ}
\end{equation}
which gives, after straightforward yet tedious computations:
\begin{widetext}
\begin{equation}
\begin{split}
\MM_{d_o^2 d_i^2}(\C) \ni \JJ\left(\int \Phi_U^{(s)}\!\!\otimes\!\Phi_U^{(s)}\dd U\right) &= \frac{1}{(d_o s)^2 - 1}
\left( s^2 \id_{d_o^2} \otimes \id_{d_i^2} + s S_{d_o, d_o} \otimes S_{d_i, d_i}\right) +\\
&-\frac{1}{(d_o s)\left((d_o s)^2 - 1\right)}
\left( s^2 \id_{d_o^2} \otimes S_{d_i, d_i} + s S_{d_o, d_o} \otimes \id_{d_i^2}\right).
\end{split}
\end{equation}
\end{widetext}

As for the channel $\int \Phi_U^{(s)}\!\otimes\Phi_V^{(s)}\dd U\dd V$ we follow
the same reasoning as in~\cite{krawiec2024discrimination}:
\begin{equation}
\JJ\left( \int \Phi_U^{(s)}\!\otimes\Phi_V^{(s)}\dd U\dd V \right) = \frac{1}{d_o^2} \id_{d_o^2 d_i^2}.
\label{JUV}
\end{equation}
Finally, in order to apply the bound~\eqref{eq:bound-diamond}, we need to
evaluate the absolute value operator of
\begin{equation}
 J \equiv \JJ\left( \int \Phi_U^{(s)}\!\otimes\Phi_U^{(s)}\dd U \right) - \JJ\left( \int \Phi_U^{(s)}\!\otimes\Phi_V^{(s)}\dd U\dd V \right)\nonumber
\end{equation}
Consider a matrix $W=S_{d_o, d_o} \otimes S_{d_i, d_i}$. As shown in
Appendix~\ref{app:polar} we can construct a polar decomposition of the matrix
$J$ such that we have $W J = |J|$. Hence we have:
\begin{eqnarray}
    \Tr_{d_o} W J &&= \Tr_{d_o} |J|\nonumber\\
     &&=\frac{s(d_o^2 -1)}{d_o^2 s^2 - 1} \id_{d_i^2} - \frac{d_o^2-1}{d_o\left( d_o^2 s^2 - 1 \right)} S_{d_i, d_i}\nonumber\\
     &&= \frac{(d_o^2 - 1)}{d_o(s^2 d_o^2 - 1)}\left( s d_o \id_{d_i^2} - S_{d_i,d_i} \right).\nonumber
\end{eqnarray}
Using the fact that $\|\alpha \id - S\| = 1 + \alpha, \; \alpha > 0$, we finally obtain:
\begin{equation}
\left\|\Tr_{d_o} \left| J \right| \right\| = \frac{d_o^2 - 1}{d_o^2 s - d_o}\label{eq:diamond-value}
\end{equation}
Hence, we get the Holevo-Helstrom bound for symmetric comparison between
random channels:
\begin{eqnarray}
    &&p^s_H\left(\int \Phi_U^{(s)}\!\otimes\Phi_U^{(s)}\dd U, \int \Phi_U^{(s)}\!\otimes\Phi_V^{(s)}\dd U\dd V\right)\nonumber\\
    &&\leq \frac{1}{2} + \frac{1}{4}  \frac{d_o^2 - 1}{d_o^2 s - d_o}
    \label{eq:bound}.
\end{eqnarray}
It turns out, that the above bound for probability of correct distinction can be
saturated using the following input state:
\begin{equation}
    \label{optStateSym}
\C^{d_i} \otimes \C^{d_i} \ni \ket{\psi} = \frac{1}{\sqrt{2}} \left( \ket{01} - \ket{10} \right).
\end{equation}
Appendix~\ref{app:saturation} shows the detailed proof of the
bound~\eqref{eq:bound}. What is noteworthy here is that our bound does not
explicitly depend on the input dimension $d_i$. However, we made an assumption
that $d_i\geq 2$ when proving the saturation of the upper bound. Moreover, note
that due to the fact that the bound is saturated by a bipartite state
\eqref{optStateSym}, there is no need to utilize additional ancillary systems in
the optimal comparison procedure apart from the ones on which the two
channels act (see the lower register in the  Fig. \ref{fig:discrim}). Finally,
let us mention that for $d_o=2$ and $s=1$, namely for the case of testing equality
between two unknown unitary channels, we obtain $p^s_H=\frac{7}{8}$, which
reproduces the famous result by Hillery et. al. \cite{Hillery10}, see
also~\cite{Soeda21}.

In the case of trivial input space, we can think about our channels
$\Phi_U^{(s)}$ and $\Phi_V^{(s)}$ as operations that produce quantum states. In
this case the upper bound is not achieved, however, we can calculate the diamond
norm directly as $\|J\|_1$. This is easily seen as the input space is trivial,
hence no entanglement with the ancillary system can help us achieve greater
separation. We get the following:
\begin{equation}
p^s_H|(d_i=1) = \frac{1}{2} + \frac{1}{4} \|J\|_1 = \frac{1}{2} + \frac{1}{4} \frac{d_o^2 - 1}{d_o\left( d_o s + 1\right)}.
\end{equation}

The final observation is that in the limit of large environment dimension $s \to
\infty$ the success probability of comparison approaches $\frac12$, as the
resulting channels approach the completely depolarizing channel. Yet, if one has
a reasonable control over the environment dimension $s$, then the success probability can be
averaged over the distribution $\pi$ of all accessible 
dimensions of the environment sizes to obtain:
\begin{equation}
\bar{p}_H = \frac12 + \frac14 \frac{d_o^2 - 1}{d_o} \mathbb{E}_{s \sim \pi} \frac{1}{s d_o - 1}.
\end{equation}
A natural example is to assume that the channels are drawn according to the
Hilbert-Schmidt measure on the set of quantum channels
\cite{zyczkowski2001induced}, which implies $\pi(s) = \delta_{s, d_i d_o}$ yielding:
\begin{equation}
\bar{p}_H = \frac12 + \frac14 \frac{d_o^2 - 1}{d_o(d_i d_o^2 - 1)}.
\end{equation}

\subsection{Asymmetric comparison}
Here we focus on an approach different from the Holevo-Helstrom one and move to
a framework based on hypothesis testing, see e.g. \cite{Wilde2020}. We introduce the hypotheses:
\begin{equation}
\begin{split}
    H_0:&\;\mathrm{operations} \; \mathrm{are}\; \mathrm{identical,} \\
    H_1:&\;\mathrm{operations} \; \mathrm{are}\; \mathrm{different.}\nonumber
\end{split}
\end{equation}
Along with these, we introduce two possible types of errors:
\begin{itemize}
\item The operations in the black boxes are the same, however our comparison
scheme stated  that they were different. This is the so-called  type-I error. We
will denote its probability by $p_{\textrm{I}}$.
\item The operations were different, however our comparison scheme stated
that they are identical. This is the so-called type-II error. We will denote its
probability by $p_{\textrm{II}}$. 
\end{itemize}
To estimate the above two errors we assume some initial state $\ket{\psi} \in
\C^{d_i} \otimes \C^{d_i}$, and that we perform a final binary measurement on
the output state of the black boxes with effects $\{\Omega, \id -\Omega\}$. The
outcome associated with $\Omega$ represents accepting the \textit{null
hypothesis}, which states that measurements are the same,  
whereas $\id-\Omega$ indicates acceptance of the \textit{alternative hypothesis}, stating that
they are different. We assume the same channel sampling method as in the
symmetric case. Then the error probabilities in asymmetric comparison read:
\begin{equation}
\label{type12errors}
\begin{split}
p_{\textrm{I}}&=\Tr((\id-\Omega)\rho_{\textrm{id}})=1-\Tr(\Omega\rho_{\textrm{id}}),\\
p_{\textrm{II}}&=\Tr(\Omega\rho_{\textrm{dif}}).
\end{split}
\end{equation}
The output states $\rho_{\textrm{id}}$ and $\rho_{\textrm{dif}}$ represent
outputs of applying respectively the same and different averaged channels  on
the initial state $\ket{\psi}$:
\begin{equation}
\label{eq:averaged-out-states}
\begin{split}
\rho_{\textrm{id}}&=\left(\int \Phi_U^{(s)}\!\otimes\Phi_U^{(s)}\dd U\right)\left(\proj{\psi}\right) =\\
&=\frac{s (d_o s - \bra{\psi} S \ket{\psi})}{d_o(s^2 d_o^2 - 1)} \id_{d_o^2} + \frac{s d_o \bra{\psi} S \ket{\psi}}{d_o(s^2 d_o^2 - 1)} S_{d_o, d_o}\nonumber\\
\rho_{\textrm{dif}}&=\left(\int \Phi_U^{(s)}\!\otimes\Phi_V^{(s)}\dd U\dd V \right)\left(\proj{\psi}\right) = \nonumber\\
&=\frac{1}{d_o^2} \id_{d_o^2}.
\end{split}
\end{equation}
Assuming a completely antisymmetric state, such that $\bra{\psi} S \ket{\psi} =
-1$, we have:
\begin{equation}
\rho_{\textrm{id}} = \frac{1}{d_o (s d_o -1)}\left( s \id_{d_o^2} - S_{d_o, d_o} \right).
\end{equation}
Now, as both $\rho_{\textrm{id}}$ and $\rho_{\textrm{diff}}$ commute with
$S_{d_o, d_o}$ let us take
\begin{equation}
\begin{split}
\Omega &= t_A \Pi_A + t_S \Pi_S, \\
\Pi_A &= \frac12 \left( \id_{d_o^2} - S_{d_o, d_o} \right), \\
\Pi_S &= \frac12 \left( \id_{d_o^2} + S_{d_o, d_o} \right), \\
t_A,\; t_S & \in  [0, 1].
\end{split}
\end{equation}
Simple calculations give us
\begin{equation}
\begin{split}
p_{\textrm{I}} &= 1 - (\alpha t_A + \beta t_S), \\
p_{\textrm{II}} &= c_A t_A + c_S t_S, \\
\alpha &= \frac{(s+1)(d_o - 1)}{2(s d_o - 1)}, \\
\beta &= \frac{(s-1)(d_o + 1)}{2(s d_o - 1)}, \\
c_A &= \frac{d_o - 1}{2 d_o}, \\
c_B &= \frac{d_o + 1}{2 d_o}. \\
\end{split}
\end{equation}
Now we need to solve the following linear program
\begin{equation}
\label{eq:linprog}
\begin{split}
&\min p_{\textrm{II}} \\
\textrm{s.t. } & p_{\textrm{I}} \leq \varepsilon \\
& t_A,\; t_S \in [0, 1].
\end{split} 
\end{equation}
The details of the solution are presented in Appendix~\ref{app:linear-chan}. The optimal
value is
\begin{equation}
\label{pIpIIChannels}
p_{\textrm{II}}^*(\varepsilon) = 
\begin{cases}
\frac{d_o-1}{2 d_o}, & \varepsilon \geq 1-\alpha,\\
\frac{d_o - 1}{2 d_o} + \frac{s d_o - 1}{d_o(s+1)}(1-\varepsilon-\alpha), & \varepsilon < 1 - \alpha.
\end{cases}
\end{equation}

For the special case $s=1$, which corresponds to random unitary channels, we have
$\beta=0$ and $\alpha=1$, $p_{\textrm{I}}$ always equals zero, and we recover the known result
\begin{equation}
\label{pII0channels}
p_{\textrm{II}}^*(0) = \frac{d_o - 1}{2 d_o}.
\end{equation}
In the case of random channels, i.e. $s=d_i d_o$, we have
\begin{equation}
p_{\textrm{II}}^*(\varepsilon) = 
\begin{cases}
\frac{d_o-1}{2 d_o}, & \varepsilon \geq 1-\alpha,\\
\frac{d_o - 1}{2 d_o} + \frac{d_i d_o^2 - 1}{d_o(d_i d_o + 1)}(1-\varepsilon-\alpha), & \varepsilon < 1 - \alpha.
\end{cases}
\end{equation}

It is worth mentioning that there exists an important relation between
 discussed two error probabilities
$p_{\textrm{I}}$ and $p_{\textrm{II}}$ in the asymmetric scheme and the error
probability in the symmetric scheme specified by the Holevo-Helstrom bound,
namely we have:
\begin{equation}
\label{pIpIIineq}
\frac{1}{2}\left(p_{\textrm{I}}+p_{\textrm{II}}\right)\geq 1-p^s_H,
\end{equation}
in which $p^s_H$ is the optimal success probability of a symmetric
comparison protocol. In our case the above inequality is saturated for $\varepsilon=1-\alpha$:
\begin{eqnarray}
\frac12\left(p_{\textrm{I}}+p_{\textrm{II}}\right) &=& \frac12\left(1-\alpha+\frac{d_o-1}{2d_o}\right)\nonumber\\
&=& \frac14\left(2-\frac{d_o^2 - 1}{d_o\left(s d_o - 1\right)}\right),
\end{eqnarray}
which means that such constraint on errors is related with an optimal comparison scheme.
For all other cases (choices of the value of $\varepsilon$) the inequality is sharp.

\section{Comparison of unknown POVMs}
In this section we shall focus on a subset of all quantum channels and limit the
setting shown in Fig.~\ref{fig:discrim} to the case when we are given the
promise that the unknown operations are some POVMs. 
\subsection{Symmetric comparison}
As shown in~\cite{heinosaari2020random} a Haar-random POVM can be defined using
random quantum channels as defined in \eqref{measurePhi}. Let us state here this
correspondence precisely in analogy to the case of  $d$-outcome von Neumann
measurements, which can be seen as compositions of a dephasing map and a random
unitary channel \cite{krawiec2024discrimination}:
\begin{equation}
\PP_{\textrm{vN}}=\Delta\circ\Phi_{\operatorname{U}},
\label{VNchannel}
\end{equation}
in which  the completely dephasing channel is specified as follows:
\begin{equation}
\Delta: \MM_d(\C) \to \MM_d(\C),\; \rho \mapsto \sum_{i=1}^{d} \bra{i}\rho\ket{i} \proj{i}.
\end{equation}
Note that we denote unitary operations by normal font $\operatorname{U}$,
whereas isometries are denoted by italic font as $U$, therefore
$\Phi_{\operatorname{U}}$ represents a unitary channel, whereas $\Phi_U$
represents the isometry channel. The composition \eqref{VNchannel} can be used
to define a measure on the set of von Neumann measurements to be just the Haar
measure  on the unitary group.

Arbitrary POVM measurement with $d_o$ outcomes can be defined as the following
quantum channel with classical output:
\begin{equation}
\PP(\rho)=\sum_{i=1}^{d_o}\Tr(M_i\rho)\proj{i},
 \end{equation}
 in which operators $M_i$ are POVM elements (sometimes called \textit{effects}).
 Following~\cite{heinosaari2020random} we define Haar-random POVM by taking the following effects generated with Haar-random 
 isometries $U$ \eqref{eq:iso}:
 \begin{equation}
    \label{Meffects}
M_i^{U,s}=U^{\dagger}(\proj{i}\otimes \id_s)U.
 \end{equation}
 Then the following holds:
\begin{propo}
A POVM channel of the form:
\begin{equation}
    \PP_U^{(s)}(\rho)=\sum_{i=1}^{d_o}\Tr(M^{U,s}_i\rho)\proj{i},
     \end{equation}
     with effects of the form \eqref{Meffects}, can be represented as:
\begin{equation}
    \label{HaarRandomPOVM}
    \PP_U^{(s)}=\Delta\circ\Phi^{(s)}_U.
\end{equation}
    \end{propo}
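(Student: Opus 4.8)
The plan is to verify the identity \eqref{HaarRandomPOVM} by direct computation, expanding the right-hand side $\Delta\circ\Phi_U^{(s)}$ and showing it agrees with the POVM channel $\PP_U^{(s)}$ on every input $\rho\in\MM_{d_i}(\C)$. First I would substitute the definition \eqref{eq:defPhiU} of the random channel into the composition, obtaining $\Delta\circ\Phi_U^{(s)}(\rho)=\Delta\!\left(\Tr_s U\rho U^\dagger\right)$. Since the dephasing channel $\Delta$ retains only the diagonal entries of its argument in the computational basis, this equals $\sum_{i=1}^{d_o}\bra{i}\!\left(\Tr_s U\rho U^\dagger\right)\!\ket{i}\proj{i}$, so the whole argument reduces to identifying the diagonal matrix elements of the reduced operator $\Tr_s U\rho U^\dagger$.

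The key step is the standard partial-trace identity
$$\bra{i}\!\left(\Tr_s X\right)\!\ket{i}=\Tr\!\left[(\proj{i}\otimes\id_s)\,X\right],$$
valid for any $X\in\MM_{d_o}(\C)\otimes\MM_s(\C)$, which one checks by writing the partial trace as $\Tr_s X=\sum_k(\id_{d_o}\otimes\bra{k})X(\id_{d_o}\otimes\ket{k})$ and comparing the two expressions index by index. Applying this with $X=U\rho U^\dagger$ gives $\bra{i}\!\left(\Tr_s U\rho U^\dagger\right)\!\ket{i}=\Tr\!\left[(\proj{i}\otimes\id_s)U\rho U^\dagger\right]$.

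Finally I would invoke cyclicity of the trace to bring $U^\dagger$ to the front, rewriting the right-hand side as $\Tr\!\left[U^\dagger(\proj{i}\otimes\id_s)U\,\rho\right]=\Tr(M_i^{U,s}\rho)$ by the definition \eqref{Meffects} of the effects. Substituting back yields $\Delta\circ\Phi_U^{(s)}(\rho)=\sum_{i=1}^{d_o}\Tr(M_i^{U,s}\rho)\proj{i}=\PP_U^{(s)}(\rho)$, which is precisely the claimed representation. There is no genuine obstacle here: the argument is entirely elementary, and the only thing requiring care is the bookkeeping in the partial-trace identity, namely keeping the projector $\proj{i}$ acting on the $d_o$-dimensional output register while the identity $\id_s$ fills the traced-out environment, so that the contraction reproduces exactly the effect $M_i^{U,s}$.
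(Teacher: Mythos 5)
Your proposal is correct and follows essentially the same route as the paper's proof: both rest on the partial-trace identity $\bra{i}(\Tr_s X)\ket{i}=\Tr[(\proj{i}\otimes\id_s)X]$ combined with cyclicity of the trace, the only difference being that you expand $\Delta\circ\Phi_U^{(s)}$ and work towards $\Tr(M_i^{U,s}\rho)$ whereas the paper runs the same chain of equalities in the opposite direction. No gap to report.
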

    \begin{proof}
Let us rewrite the expression for $\Tr(M^{U,s}_i\rho)$:
\begin{eqnarray}
    \Tr(M^{U,s}_i\rho)&&= \Tr(U^{\dagger}(\proj{i}\otimes \id_s)U\rho)\nonumber\\
    &&= \Tr((\proj{i}\otimes \id_s)U\rho U^{\dagger})\nonumber\\
    &&=\Tr(\sum_{j=1}^s(\proj{i}\otimes \proj{j})U\rho U^{\dagger})\nonumber\\
    &&=\bra{i}\sum_{j=1}^s\bra{j}U\rho U^{\dagger}\ket{j}\ket{i}\nonumber\\
    &&=\bra{i}\Tr_s U\rho U^{\dagger}\ket{i}=\bra{i}\Phi_U^{(s)}(\rho)\ket{i}.\nonumber
\end{eqnarray}
Hence we have:
\begin{eqnarray}
    \PP_U^{(s)}(\rho)&&=\sum_{i=1}^{d_o}\Tr(M^{U,s}_i\rho)\proj{i}\nonumber\\
    &&=\sum_{i=1}^{d_o}\bra{i}\Phi_U^{(s)}(\rho)\ket{i}\proj{i}\nonumber\\
    &&=\Delta(\Phi_U^{(s)}(\rho))=\Delta\circ\Phi_U^{(s)}(\rho).
\end{eqnarray}
\end{proof}
Due to the relation \eqref{HaarRandomPOVM}, we can sample random POVMs according
to the same measure used previously to sample arbitrary random quantum channels,
namely according to the Haar measure on the isometries \eqref{eq:iso}. In full
analogy to the case of symmetric comparison between two unknown random
channels we effectively discriminate between the averaged channels $\int
\PP_U^{(s)} \!\otimes\! \PP_U^{(s)} \dd U$ and $\int \PP_U^{(s)} \!\otimes\!
\PP_V^{(s)} \dd U\dd V$. Based on the above proposition, in order to
characterize symmetric comparison of unknown POVMs via the Holevo-Helstrom
bound for the success probability $p^s_H\left(\int \PP_U^{(s)} \!\otimes\!
\PP_U^{(s)} \dd U, \int \PP_U^{(s)} \!\otimes\! \PP_V^{(s)} \dd U\dd V\right)$,
we need to calculate the diamond norm:
\begin{eqnarray}
&\left\|\int \PP_U^{(s)} \!\otimes\! \PP_U^{(s)} \dd U- \int \PP_U^{(s)} \!\otimes\! \PP_V^{(s)} \dd U\dd V \right\|_\diamond =& \nonumber\\
&\left\| \left( \Delta \otimes \Delta \right) \circ \left( \int \Phi_U^{(s)} \!\otimes\! \Phi_U^{(s)} \dd U- \int \Phi_U^{(s)} \!\otimes\! \Phi_V^{(s)} \dd U\dd V\right) \right\|_\diamond&\nonumber
\end{eqnarray}
Following the same reasoning as in Section~\ref{sec:randchan} we get:
\begin{eqnarray}
&&\JJ\left(\left(\Delta \otimes \Delta \right) \circ \left(\int \Phi_U^{(s)} \!\otimes\! \Phi_U^{(s)} \dd U \right) \right)\nonumber\\
&&= \frac{s}{\left( s d_o \right)^2 - 1} \bigg( s \id_{d_o^2} \otimes \id_{d_i^2} + T_{d_o, d_o} \otimes S_{d_i, d_i}+\nonumber\\
&&- \frac{1}{s d_o} \left( s \id_{d_o^2} \otimes S_{d_i, d_i} + T_{d_o, d_o} \otimes \id_{d_i^2} \right) \bigg).\nonumber
\end{eqnarray}
where $T_{d, d} = \Delta(S_{d, d})$. On the other hand, due to \eqref{JUV}, we
easily obtain:
\begin{equation}
    \int \PP_U^{(s)} \otimes \PP^{(s)}_V  \dd U \dd V = \frac{1}{d_o^2} \id_{d_o^2 d_i^2}.
\end{equation}
Again, we introduce the difference of the Choi representations of the two
discriminated channels:
\begin{equation}
J_{\mathcal P} \equiv \JJ\left(\int \PP_U^{(s)} \!\otimes\! \PP_U^{(s)} \dd U\right) - \JJ\left(\int \PP_U^{(s)} \!\otimes\! \PP_V^{(s)} \dd U\dd V\right) \nonumber
\end{equation}
Using a similar approach to the one described
in~\cite{krawiec2024discrimination} we arrive at:
\begin{equation}
\Tr_{d_o} |J_{\mathcal P}| = \frac{2(d_o - 1)}{d_o(s^2 d_o^2 - 1)}\left( s d_o \id - S \right),
\end{equation}
and a bound for the norm:
\begin{eqnarray}
    &\left\|\int \PP_U^{(s)} \!\otimes\! \PP_U^{(s)} \dd U- \int \PP_U^{(s)} \!\otimes\! \PP_V^{(s)} \dd U\dd V \right\|_\diamond&\nonumber\\
    &\leq \left\| \Tr_{d_o}|J_{\mathcal P}| \right\| = 2\frac{d_o - 1}{d_o\left( s d_o - 1 \right)}.&
\end{eqnarray}
Hence, we get:
\begin{eqnarray}
    &p^s_H\left(\int \PP_U^{(s)} \!\otimes\! \PP_U^{(s)} \dd U, \int \PP_U^{(s)} \!\otimes\! \PP_V^{(s)} \dd U\dd V\right)&\nonumber\\
     &\leq \frac{1}{2} + \frac{1}{2} \frac{d_o - 1}{d_o\left( s d_o - 1
     \right)}.&
     \label{eq:bound-povm}
\end{eqnarray}
Again, the above bound for probability of correct distinction can be saturated
using the maximally antisymmetric input state:
\begin{equation}
\C^{d_i} \otimes \C^{d_i} \ni \ket{\psi} = \frac{1}{\sqrt{2}} \left( \ket{01} - \ket{10} \right).
\end{equation}
The reasoning is similar to that presented in~\cite{krawiec2024discrimination}. Note that for $s=1$, which corresponds 
to the case of comparing von Neumann measurements, we exactly restore the bound
found in \cite{krawiec2024discrimination}.

As in the case of random channels, we can consider the case when one wishes to
average over different environment dimensions $s$ according to some probability
distribution $\pi$. This results in the following averaged success probability:
\begin{equation}
\bar{p}_H = \frac12 + \frac12 \frac{d_o - 1}{d_o} \mathbb{E}_{s \sim \pi} \frac{1}{s d_o - 1}.
\end{equation}
As previously, for the case of Hilbert-Schmidt measure we have $\pi(s) =
\delta_{s, d_i d_o}$ yielding:
\begin{equation}
\bar{p}_H = \frac12 + \frac12 \frac{d_o - 1}{d_o(d_i d_o - 1)}.
\end{equation}

In the scenario of comparison of POVMs the case of trivial input dimension
$d_i=1$ results in a trivial POVM with just one effect.
\subsection{Asymmetric comparison}
Finally, we arrive at the asymmetric comparison of unknown POVMs. Following the same line of thinking as in Section~\ref{sec:randchan}
we get:
\begin{eqnarray}
&&\rho_\textrm{id}=\frac{1}{d_o\left(s^2 d_o^2 - 1 \right)} \bigg( s\big(s d_o - \bra{\psi} S \ket{\psi}\big) \id + \nonumber\\
&&+\big(s d_o \bra{\psi} S \ket{\psi} - 1\big)T \bigg) \nonumber\\
&&\rho_{\textrm{dif}}=\frac{1}{d_o^2} \id_{d_o^2}.
\end{eqnarray}
Setting once again $\bra{\psi} S \ket{\psi} = -1$, which corresponds to taking completely antisymmetric state $\ket{\psi}$, we have:
\begin{equation}
\rho_\textrm{id} =\frac{1}{d_o\left(s d_o - 1 \right)} \left( s \id -  T \right).
\end{equation}
As in the case of random channels, we choose the final measurement effects to be
of the form:
\begin{equation}
\begin{split}
\Omega &= t_A \Pi_A + t_S \Pi_S, \\
\Pi_A &= \left( \id_{d_o^2} - T \right), \\
\Pi_S &= T, \\
t_A,\; t_S & \in  [0, 1].
\end{split}
\end{equation}
Following the similar reasoning as in the case of random channels we arrive at:
\begin{equation}
\begin{split}
p_{\textrm{I}} &= 1 - (\alpha t_A + \beta t_S), \\
p_{\textrm{II}} &= c_A t_A + c_S t_S, \\
\alpha &= \frac{s(d_o - 1)}{s d_o - 1}, \\
\beta &= \frac{(s-1)}{ d_o - 1}, \\
c_A &= \frac{d_o - 1}{d_o}, \\
c_B &= \frac{1}{d_o}. \\
\end{split}
\end{equation}
Solving the same problem as in \eqref{eq:linprog} we get the optimal value:
\begin{equation}
    \label{pIpIIPOVMS}
p_{\textrm{II}}^*(\varepsilon) = 
\begin{cases}
    \frac{s d_o - 1}{s d_o}(1 - \varepsilon)), & \varepsilon \geq 1 - \alpha, \\
1 - \frac{s d_o - 1}{d_o(s - 1)} \varepsilon, & \varepsilon < 1 - \alpha.
\end{cases}
\end{equation}
The details of the solution are presented in Appendix~\ref{app:linear-povm}.

For the special case $s=1$, i.e. random von Neumann measurements, we have
$\beta=0$ and $\alpha=1$, and setting $\varepsilon=0$ we recover the known result~\cite{krawiec2024discrimination}:
\begin{equation}
    \label{p0POVMS}
p_{\textrm{II}}^*(0) = \left( 1 - \frac{1}{d_o} \right).
\end{equation}

In full analogy with the case of asymmetric comparison of random channels
the inequality \eqref{pIpIIineq} is saturated for $\varepsilon=1-\alpha$, for which we have:
\begin{equation}
\frac{1}{2}( p_{\textrm{I}}+ p_{\textrm{II}})=\frac{d_o^2s-2d_o+1}{2d_o(d_os-1)}.
\end{equation}
This expression exactly equals:
\begin{equation}
1-p^s_H=\frac{1}{2} - \frac{1}{2} \frac{d_o - 1}{d_o\left( s d_o - 1 \right)}=\frac{d_o^2s-2d_o+1}{2d_o(d_os-1)},
\end{equation}
in which we take for $p^s_H$ the upper bound from \eqref{eq:bound-povm}.

\section{Discussion}
In this work we have analyzed the problem of comparing two unknown
arbitrary quantum channels and arbitrary POVM measurements drawn according to
their corresponding Haar measures using both symmetric and asymmetric
comparison schemes in the single application regime.
In particular, our findings show that the comparison task admits a remarkably simple optimal solution. A single, fully parameter-free strategy achieves the minimum possible error rates for all input and output dimensions. The procedure is straightforward: one prepares the antisymmetric two-copy input state on the channel ports and performs a measurement that projects onto the symmetric and antisymmetric subspaces. No auxiliary systems beyond the second port are required. This universality and structural simplicity highlight that optimal performance in the comparison problem does not rely on sophisticated entangled resources, but rather emerges directly from the symmetry properties of the task.
\subsection{Random Channels}
Let us first discuss the case of comparison of random channels. For the case
of symmetric comparison the \textit{optimal} comparison probability
$p^s_H$ reads:
\begin{eqnarray}
    &&p^s_H\left(\int \Phi_U^{(s)}\!\otimes\Phi_U^{(s)}\dd U, \int \Phi_U^{(s)}\!\otimes\Phi_V^{(s)}\dd U\dd V\right)\nonumber\\
    &&= \frac{1}{2} + \frac{1}{4}  \frac{d_o^2 - 1}{d_o^2 s - d_o},
\end{eqnarray}
and is depicted in the Figure \eqref{Fig:Channels} as a function of the output
dimension treated for clarity as a continuous parameter, for three different
values of the environment dimension $s$.
\begin{figure}[!h]
    \centering\includegraphics[width=\columnwidth]{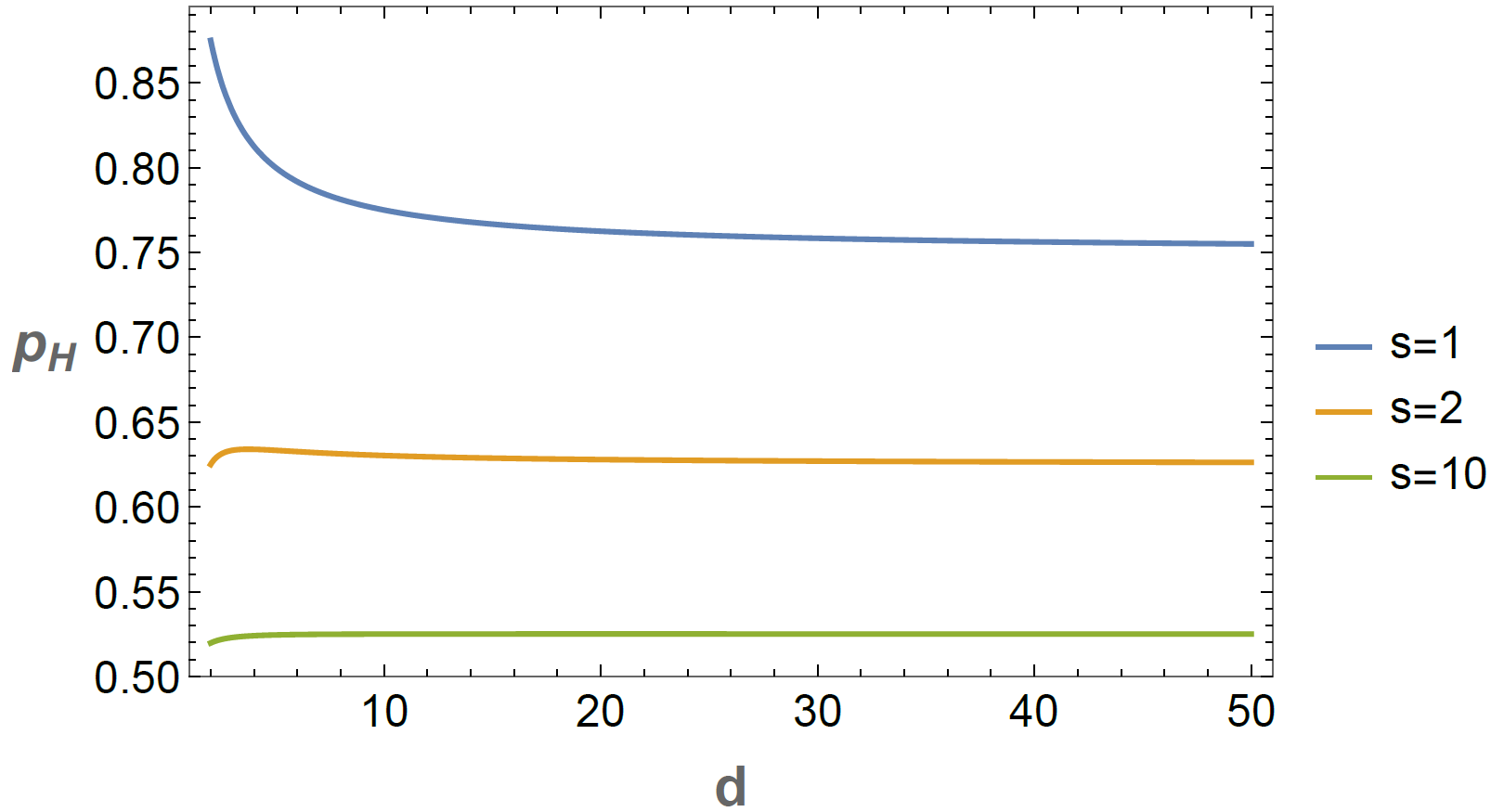}
        \caption{ Plot of the optimal value of a success probability for
        comparison of two random channels in a symmetric scheme as a
        function of (continuated) output dimension $d$, presented for three
        different values of the environment dimension $s$. The case $s=1$ corresponds to
        random unitary channels.}\label{Fig:Channels}
    \end{figure}
It is worth mentioning that for the case of comparison of random unitary
channels ($s=1$), the optimal success probability is maximal for the output
dimension equal $d_o=2$, whereas in the case of random non-unitary channels the
maximum is shifted towards higher values of the output dimension $d_o$. The
asymptotic values for $p^s_H$ in the limit of $d_o\rightarrow\infty$ are
depicted in the Figure \ref{Fig:ChannelsAsympt}, and they tend to $\tfrac{1}{2}$
in the limit of large $s$.
    \begin{figure}[!h]
        \centering\includegraphics[width=\columnwidth]{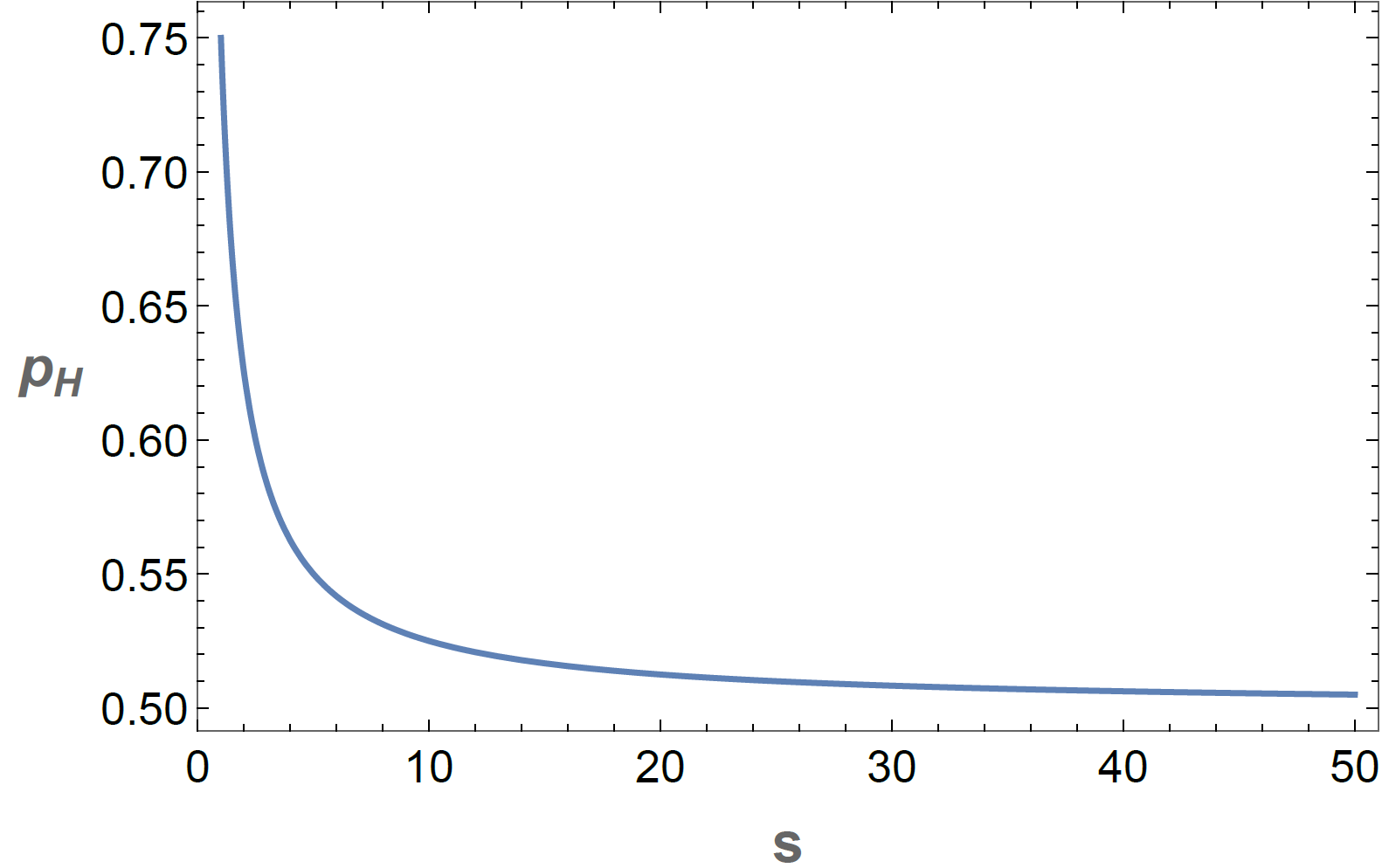}
            \caption{ Plot of the value of a success probability for
            comparison of two random channels in a symmetric scheme as a
            function of (continuated) environment dimension $s$, taken in the
            asymptotic limit of infinite output
            dimension.}\label{Fig:ChannelsAsympt}
        \end{figure}

In the case of asymmetric comparison of random channels
\eqref{pIpIIChannels}, note that the minimal type-I error probability is equal
to zero only for a comparison of two unitary channels ($s=1$), see \eqref{pII0channels}. This means
that only for unitary channels one can completely avoid the type-I error, namely
the case of erroneously deciding that identical channels are different. In the
case of non-unitary channels generated with environment of arbitrary dimension
$s\geq 2$ such a situation is unavoidable. 

\subsection{Random POVMs}
The \textit{optimal} probabilities for a comparison of two Haar-random POVMs
in the symmetric comparison scheme:
\begin{eqnarray}
    &p^s_H\left(\int \PP_U^{(s)} \!\otimes\! \PP_U^{(s)} \dd U, \int \PP_U^{(s)} \!\otimes\! \PP_V^{(s)} \dd U\dd V\right)&\nonumber\\
     &= \frac{1}{2} + \frac{1}{2} \frac{d_o - 1}{d_o\left( s d_o - 1 \right)}.&
     \label{eq:bound-povm-disc}
\end{eqnarray}
behave quite different from those for Haar-random quantum channels, see Figure
\ref{Fig:POVMS}. Namely, in this case the success probability attains its maximal
value always for the output dimension $d_o=2$, as is the case for comparing
 two random von Neumann measurements ($s=1$). Also in the asymptotic
limit of large output dimension, $p^s_H$ tends to $\tfrac{1}{2}$ for all
environment dimensions $s$.
\begin{figure}[!h]
    \centering\includegraphics[width=\columnwidth]{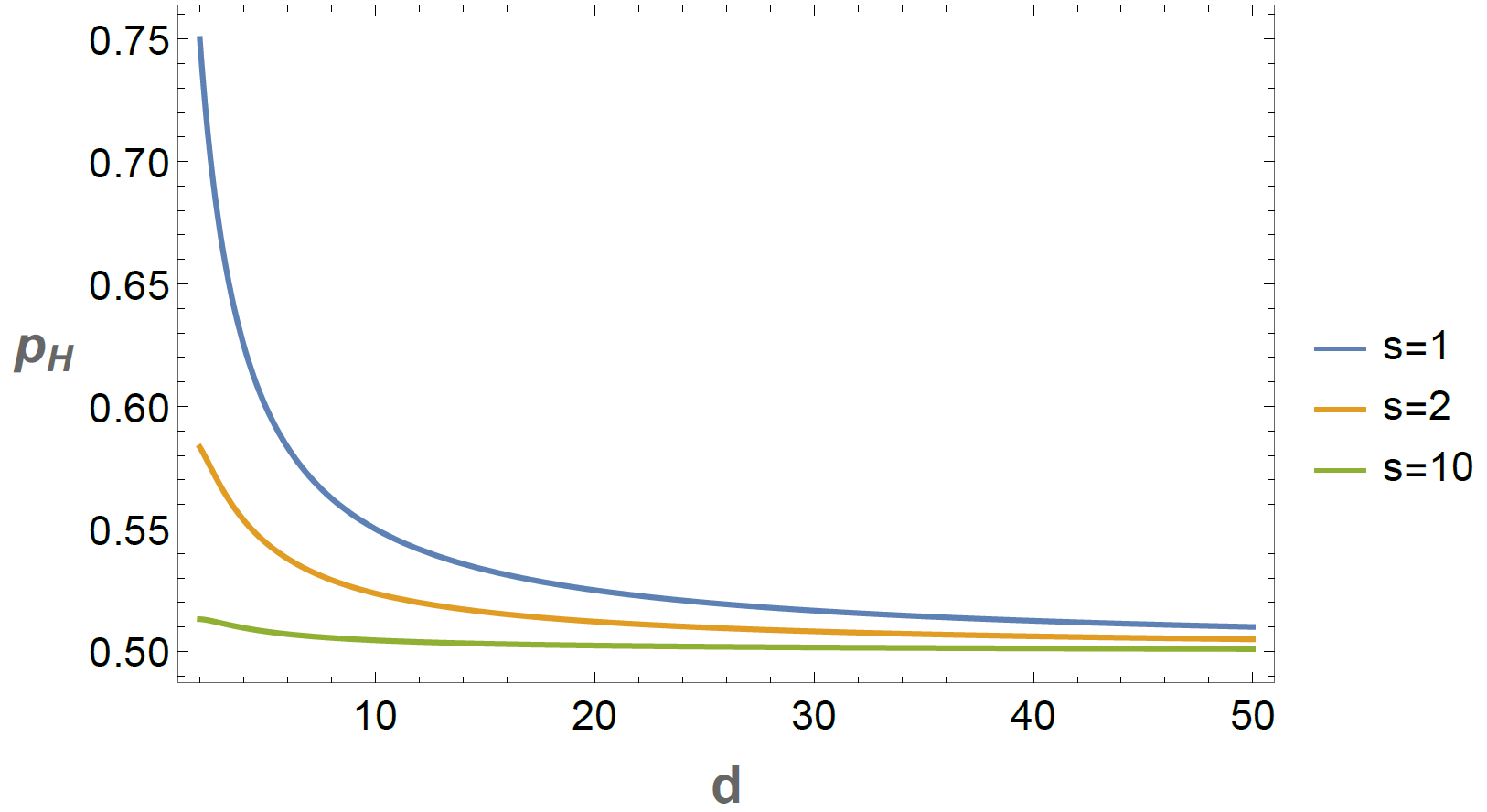} \caption{
        Plot of the optimal value of a success probability for comparison
        of two random POVM measurements in a symmetric scheme as a function of (continuated) 
        output dimension $d$, presented for three different values of the environment dimension $s$.
        The case $s=1$ corresponds to random von Neumann measurements.}\label{Fig:POVMS}
    \end{figure}

In the case of asymmetric comparison scheme we observe analogous situation
like in comparison of random channels, namely the minimal value of the
type-I error probability \eqref{pIpIIPOVMS} attains zero only in the case of von
Neumann measurements ($s=1$) \eqref{p0POVMS}. 
Our conclusions regarding the ($s=1$) case for both unitary channels and 
von Neumann measurements connect to perfect/near-perfect discrimination of unitary operations \cite{Duan07,Duan09,Acin2001}.
Therefore, in the case of comparison of POVM
measurements generated with environment of arbitrary dimension $s\geq 2$,
type-I error corresponding to the case of erroneously treating identical
measurements as distinct ones is unavoidable.

 Note that in this work we focused on providing the most general solution to the
 problem of single-shot comparison of two arbitrary quantum channels and
 measurements. A natural continuation of this line of research would be to
 consider multiple-uses scenario for comparison of general channels.

\section*{Acknowledgements}
MM acknowledges support from the National Science Center (NCN), Poland, under
Project Opus No. 2024/53/B/ST2/02026. ZP and ŁP acknowledge support from the
National Science Center (NCN), Poland, under Project Opus No.
2022/47/B/ST6/02380.

\bibliographystyle{quantum}

\appendix
\section{Polar decomposition of the Choi matrix for general channels}\label{app:polar}
\begin{widetext}
Let
\begin{equation}
J = \frac{1}{(d_o s)^2 - 1} \left( s^2 \id_{d_o^2} \otimes \id_{d_i^2} + s S_{d_o, d_o} \otimes S_{d_i, d_i}\right)
-\frac{1}{(d_o s)\left((d_o s)^2 - 1\right)}\left( s^2 \id_{d_o^2} \otimes S_{d_i, d_i} + s S_{d_o, d_o} \otimes \id_{d_i^2}\right)
- \frac{1}{d_o^2} \id_{d_o^2} \otimes \id_{d_i^2},
\end{equation}
and
\begin{equation}
W = S_{d_o, d_o} \otimes S_{d_i, d_i}.
\end{equation}
In order to show that $J = W |J|$ is a proper polar decomposition, it suffices to show that $J^2 = (WJ)^2$.
We have:
\begin{eqnarray}
WJ &&= \frac{s}{(d_o s)^2 - 1} \bigg( s S_{d_o, d_o} \otimes S_{d_i, d_i} + \id_{d_o^2} \otimes \id_{d_i^2} - \frac{1}{d_o s}\left( s S_{d_o, d_o} \otimes \id_{d_i^2} + \id_{d_o^2} \otimes S_{d_i, d_i} \right) \bigg),
\end{eqnarray}
and
\begin{equation}
\begin{split}
(WJ)^2 &= \left( \frac{s}{(d_o s)^2 - 1} \right)^2 \left( s S_{d_o, d_o} \otimes S_{d_i, d_i} + \id_{d_o^2} \otimes \id_{d_i^2} - \frac{1}{d_o s}\left( s S_{d_o, d_o} \otimes \id_{d_i^2} + \id_{d_o^2} \otimes S_{d_i, d_i} \right) \right)^2 +\\
&- \frac{2s}{d_o^2 ((d_o s)^2 - 1)} \left( s \id_{d_o^2} \otimes \id_{d_i^2} + S_{d_o, d_o} \otimes S_{d_i, d_i} - \frac{1}{d_o s}\left( s\id_{d_o^2} \otimes S_{d_i, d_i} + S_{d_o, d_o} \otimes \id_{d_i^2} \right) \right) +\\
&+ \frac{1}{d_o^4} \id_{d_o^2} \otimes \id_{d_i^2}.\label{eq:WJ2}
\end{split}
\end{equation}
On the other hand, we have:
    \begin{equation}
    \begin{split}
    J^2 &= \left( \frac{s}{(d_o s)^2 - 1} \right)^2 \left( s\id_{d_o^2} \otimes \id_{d_i^2} + S_{d_o, d_o} \otimes S_{d_i, d_i} - \frac{1}{d_o s}\left( s \id_{d_o^2} \otimes S_{d_i, d_i} + S_{d_o, d_o} \otimes \id_{d_i^2} \right) \right)^2 +\\
    &- \frac{2s}{d_o^2 ((d_o s)^2 - 1)} \left( s \id_{d_o^2} \otimes \id_{d_i^2} + S_{d_o, d_o} \otimes S_{d_i, d_i} - \frac{1}{d_o s}\left( s\id_{d_o^2} \otimes S_{d_i, d_i} + S_{d_o, d_o} \otimes \id_{d_i^2} \right) \right) +\\
    &+ \frac{1}{d_o^4} \id_{d_o^2} \otimes \id_{d_i^2}.\label{eq:J2}
    \end{split}
    \end{equation}
\end{widetext}
What is left is to show that the first terms in the expressions \eqref{eq:WJ2}
and \eqref{eq:J2} are equal. This can be done by direct calculation, which we
omit here for brevity.
\section{Saturation of diamond norm for general channels}\label{app:saturation}
Let
\begin{equation}
\begin{split}
J &= \frac{1}{(d_o s)^2 - 1} \left( s^2 \id_{d_o^2} \otimes \id_{d_i^2} + s S_{d_o, d_o} \otimes S_{d_i, d_i}\right) +\nonumber\\
& -\frac{1}{(d_o s)\left((d_o s)^2 - 1\right)}\left( s^2 \id_{d_o^2} \otimes S_{d_i, d_i} + s S_{d_o, d_o} \otimes \id_{d_i^2}\right) +\nonumber\\
& - \frac{1}{d_o^2} \id_{d_o^2} \otimes \id_{d_i^2},
\end{split}
\end{equation}
and
\begin{equation}
    \rho = \frac{1}{2} \left( \ket{01} - \ket{10} \right)\left( \bra{01} - \bra{10} \right)
\end{equation}
We calculate
\begin{equation}
    \Tr_{d_i} J\left(\id_{d_o^2} \otimes \rho^\top \right) = \frac{1}{d_o \left( d_o s-1 \right)}\left(\frac{1}{d} \id_{d_o^2} - S_{d_o, d_o} \right).
\end{equation}
Note that
\begin{equation}
\|\alpha \id_{d_o^2} - S_{d_o, d_o} \|_1 = d_o (d_o - \alpha).
\end{equation}
Thus
\begin{equation}
\left\| \Tr_{d_i} J \left(\id_{d_o^2} \otimes \rho^\top \right)\right\|_1 = \frac{d_o^2 - 1}{d_o(d_o s - 1)}
\end{equation}
which recovers the value in~\eqref{eq:diamond-value}.

\section{Solution to the linear programming problem for asymmetric
comparison of random channels}\label{app:linear-chan}

Recall the states to be discriminated:
\begin{align}
\rho_{\mathrm{id}}
&= \frac{1}{d_o (s d_o - 1)}\big( s\,\id - S \big),\\
\rho_{\mathrm{dif}}
&= \frac{1}{d_o^{2}}\,\id,
\end{align}
Given an effect $0\le \Omega\le \id$ the type-I and type-II 
error probabilities read:
\begin{equation}
\begin{split}
p_{\textrm{I}}(\Omega) &= 1 - \Tr(\Omega \rho_{\mathrm{id}}),\\
p_{\textrm{II}}(\Omega) &= \Tr(\Omega \rho_{\mathrm{dif}}).
\end{split}
\end{equation}
The asymmetric hypothesis testing problem is
\begin{equation}
\min_{0\le \Omega\le \id} p_{\textrm{II}}(\Omega)
\quad \text{s.t.} \quad
p_{\textrm{I}}(\Omega)\le \varepsilon.
\end{equation}

\paragraph{Reduction to two scalars.}
Because $\rho_{\mathrm{id}}$ and $\rho_{\mathrm{dif}}$ lie in the commutative
algebra $\mathrm{span}\{\id, S\}$, we may restrict to effects diagonal in
the swap eigenspaces:
\begin{equation}
\Omega = t_A\,\Pi_A + t_S\,\Pi_S, \qquad 0\le t_A,t_S\le 1,
\end{equation}
where the projectors onto the antisymmetric and symmetric subspaces are
\begin{equation}
\Pi_A = \frac{1}{2}(\id-S),\qquad
\Pi_S = \frac{1}{2}(\id+S).
\end{equation}
Write the resulting linear forms
\begin{equation}
p_{\textrm{I}} = 1 - (\alpha\, t_A + \beta\, t_S),\qquad
p_{\textrm{II}} = c_A\, t_A + c_S\, t_S,
\end{equation}
where the coefficients are
\begin{equation}
\begin{split}
\alpha &= \Tr(\Pi_A \rho_{\mathrm{id}}) = \frac{(s+1)(d_o-1)}{2\,(s d_o - 1)},\\
\beta &= \Tr(\Pi_S \rho_{\mathrm{id}}) = \frac{(s-1)(d_o+1)}{2\,(s d_o - 1)},\\
c_A &= \Tr(\Pi_A \rho_{\mathrm{dif}}) = \frac{d_o-1}{2 d_o}, \\
c_S &= \Tr(\Pi_S \rho_{\mathrm{dif}}) = \frac{d_o+1}{2 d_o}.
\end{split}
\end{equation}

\paragraph{Linear-program geometry.}
Thus, we have the following linear program:
\begin{equation}
\begin{split}
&\min_{t_A, t_S} c_A t_A + c_S t_S\\
\textrm{s.t. }& \alpha t_A + \beta t_S \ge 1 - \varepsilon,\\
&0\le t_A,t_S\le 1.
\end{split}
\end{equation}

With the parametrization above, the constraint set is a rectangle intersected
with the half-space $\alpha t_A + \beta t_S \ge 1-\varepsilon$. Because
$\frac{c_S}{\beta}>\frac{c_A}{\alpha}$, the objective slope is steeper along the
$S$-direction while the constraint grows faster in the $A$-direction, hence the
minimum is attained by first setting $t_A=1$. If $\alpha \geq 1-\varepsilon$,
then $(t_A^*,t_S^*)=(1,0)$ is feasible and optimal, yielding $p_{\textrm{II}}^{\star}=c_A$.
Otherwise, set $t_A=1$ and choose the minimal $t_S$ that satisfies the
constraint, $t_S=(1-\varepsilon-\alpha)/\beta$, which gives $p_{\textrm{II}}^{\star}=c_A
+ c_S\,\frac{1-\varepsilon-\alpha}{\beta}$. 

Putting everything together, we have:
\begin{equation}
p_{\textrm{II}}^{\star}(\varepsilon)=
\begin{cases}
\frac{d_o-1}{2 d_o}, & \varepsilon \geq 1-\alpha,\\
\frac{d_o-1}{2 d_o} + \frac{s d_o - 1}{d_o (s-1)} \left(1-\varepsilon-\alpha\right), & \varepsilon < 1-\alpha,
\end{cases}
\end{equation}
and the optimizer is
\begin{equation}
(t_A^{\star},t_S^{\star})=
\begin{cases}
(1,0), & \varepsilon \geq 1-\alpha,\\
\left(1,\,\frac{1-\varepsilon-\alpha}{\beta}\right), & \varepsilon < 1-\alpha.
\end{cases}
\end{equation}

\section{Solution to the linear programming problem for asymmetric
comparison of random POVMs}\label{app:linear-povm}
\paragraph{Reduction to two scalars.} Because $\rho_{\mathrm{id}}$ and $\rho_{\mathrm{dif}}$ now lie in the commutative
algebra $\mathrm{span}\{\id, T\}$, with $T=\Delta(S)$ being a dephased swap operator, we may restrict to effects diagonal in
the $T$ eigenspaces:
\begin{equation}
\Omega = t_A\,\Pi_A + t_S\,\Pi_S, \quad 0\le t_A,t_S\le 1,
\end{equation}
where the projectors are
\begin{equation}
\Pi_A = (\id-T),\quad
\Pi_S = T.
\end{equation}
Using these definitions we have:
\begin{equation}
\begin{split}
    \alpha & = \frac{s(d_o - 1)}{s d_o - 1},\\
    \beta & = \frac{(s-1)}{ sd_o - 1}, \\
    c_A & = \frac{d_o - 1}{d_o}, \\
    c_S & = \frac{1}{d_o}. \\
\end{split}
\end{equation}
Hence, we have the following linear program:
\begin{equation}
\begin{split}
& \min_{t_A,t_S} c_A t_A+c_S t_S \\
\textrm{s.t. } & \alpha t_A+\beta t_S \geq 1-\varepsilon, \\
& \quad 0 \leq t_A, t_S \leq 1.
\end{split}
\end{equation}
Again, we notice that $\frac{c_S}{\beta}>\frac{c_A}{\alpha}$, hence we increase
$t_A$ before $t_S$. Following the same reasoning as in Appendix~\ref{app:linear-chan}, we first increase
$t_A$ and leave $t_S=0$. The optimal value is attained when the constraint is
saturated, i.e. $\alpha t_A = 1-\varepsilon$ when $\varepsilon \geq 1-\alpha$. 

Now, set $t_A=1$ and choose $t_S$ such that the constraint is saturated when
$\varepsilon < 1-\alpha$, i.e. $t_S = \frac{1-\alpha - \varepsilon}{1-\alpha}$.

Combining everything we get
\begin{equation}
p_{\textrm{II}}^{\star}(\varepsilon)=
\begin{cases}
\frac{s d_o-1}{s d_o}\,(1-\varepsilon), & \varepsilon \geq 1 -\alpha,\\
1-\frac{s d_o-1}{d_o(s-1)} \varepsilon, & \varepsilon < 1 -\alpha,
\end{cases}
\end{equation}
and the optimizer is
\begin{equation}
(t_A^{\star},t_S^{\star})=
\begin{cases}
( \frac{1-\varepsilon}{\alpha}, 0), & \varepsilon \geq 1 -\alpha,\\
(1, \frac{1-\alpha - \varepsilon}{1-\alpha}), & \varepsilon < 1 -\alpha.
\end{cases}
\end{equation}

For the von Neumann case, $s=1$ we have $\beta=0$, $\alpha=1$. The constraint is
simply $t_A \geq 1-\varepsilon$ and the optimizer is
$t_A^{\star}=1-\varepsilon$, $t_S^{\star}=0$, so
\begin{equation}
p_{\textrm{II}}^{\star}(\varepsilon)=\left(1-\frac{1}{d_o}\right)(1-\varepsilon).
\end{equation}

\end{document}